\documentclass[10pt, oneside]{article}   	
\usepackage[english]{babel}
\usepackage[toc,page]{appendix}
\usepackage[margin=1in]{geometry}  
\usepackage{hyperref}
\geometry{letterpaper}                   		
\usepackage{graphicx,wrapfig,lipsum}	
\usepackage[labelformat=empty]{caption}
\usepackage{yfonts}	

\usepackage[lined,boxed,ruled,norelsize,algo2e,linesnumbered]{algorithm2e}
\usepackage{thmtools}
\usepackage{xspace}

\usepackage{amssymb}
\usepackage{amsmath}
\usepackage{amsthm}
\usepackage{todonotes}

\newcommand{\cost}{\text{cost}}
\newcommand{\dd}{\textrm {d}}
\newcommand{\Supp}{\text{Supp}}


\def\Xint#1{\mathchoice
{\XXint\displaystyle\textstyle{#1}}%
{\XXint\textstyle\scriptstyle{#1}}%
{\XXint\scriptstyle\scriptscriptstyle{#1}}%
{\XXint\scriptscriptstyle\scriptscriptstyle{#1}}%
\!\int}
\def\XXint#1#2#3{{\setbox0=\hbox{$#1{#2#3}{\int}$ }
\vcenter{\hbox{$#2#3$ }}\kern-.6\wd0}}

\def\dashint{\Xint-}

\newtheorem{thm}{Theorem}[section]
\newtheorem{lem}[thm]{Lemma}
\newtheorem{prop}[thm]{Proposition}
\newtheorem{cor}[thm]{Corollary}

\theoremstyle{definition}
\newtheorem{defn}{Definition}
\newtheorem*{rem}{Remark}

\title{Chasing Convex Bodies Optimally}
\author{Mark Sellke\\
Stanford University}

\date{}

\begin{document}

\maketitle

\abstract{In the \emph{chasing convex bodies} problem, an online player receives a request sequence of $N$ convex sets $K_1,\dots, K_N$ contained in a normed space $X$ of dimension $d$. The player starts at $x_0=0\in X$, and at time $n$ observes the set $K_n$ and then moves to a new point $x_n\in K_n$, paying a cost $||x_n-x_{n-1}||$. The player aims to ensure the total cost exceeds the minimum possible total cost by at most a bounded factor $\alpha_d$ independent of $N$, despite $x_n$ being chosen without knowledge of the future sets $K_{n+1},\dots,K_N$. The best possible $\alpha_d$ is called the competitive ratio. Finiteness of the competitive ratio for convex body chasing was proved for $d=2$ in \cite{friedmanlinial} and conjectured for all $d$. \cite{chasingconvex2018} recently resolved this conjecture, proving an exponential $2^{O(d)}$ upper bound on the competitive ratio. 

We give an improved algorithm achieving competitive ratio $d$ in any normed space, which is \emph{exactly} tight for $\ell^{\infty}$. In Euclidean space, our algorithm also achieves competitive ratio $O(\sqrt{d\log N})$, nearly matching a $\sqrt{d}$ lower bound when $N$ is subexponential in $d$. Our approach extends that of \cite{chasingnested2018} for \emph{nested} convex bodies, which is based on the classical Steiner point of a convex body. We define the \emph{functional} Steiner point of a convex function and apply it to the associated work function.}

\section{Introduction} 

Let $X$ be a $d$-dimensional normed space and $K_1,K_2,\dots, K_N\subseteq X$ a finite sequence of convex bodies. In the \emph{chasing convex bodies} problem, a player starting at $x_0=0\in X$ learns the sets $K_n$ one at a time, and after observing $K_n$ moves to a point $x_n\in K_n$. The player's cost is the total path length 
\[
	\cost(x_1,\dots,x_N)=\sum_{n=1}^N ||x_n-x_{n-1}||.
\] 
Denote the smallest cost (in hind-sight) among all such sequences by 
\[
	\cost(K_1,\dots,K_N)=\min_{(y_n\in K_n)_{n\leq N}}\sum_{n=1}^N ||y_n-y_{n-1}||.
\] 
The player's goal is to ensure that 
\begin{equation}\label{eq:CR}
	\cost(x_1,\dots,x_N)\leq \alpha_d\cdot \cost(K_1,\dots,K_N)
\end{equation} 
holds for any sequence $K_1,\dots,K_N$, where $\alpha_d$ is as small as possible and is independent of $N$. The challenge is that the points $x_n=x_n(K_1,\dots,K_n)$ must depend only on the sets revealed so far. To encapsulate this requirement we say the player's path must be \emph{online}, as opposed to the optimal \emph{offline} path which can depend on future information.  An online algorithm achieving \eqref{eq:CR} for some finite $\alpha_d$ is said to be $\alpha_d$-\emph{competitive}, and the smallest possible $\alpha_d$ among all online algorithms is the \emph{competitive ratio} of chasing convex bodies.

In the most general sense, the problem of asking a player to choose an online path $x_1,\dots, x_N$ through a sequence of subsets $S_1,\dots,S_N$ in a metric space $\mathcal X$ is known as \emph{metrical service systems}. These sets are typically called ``requests". When arbitrary subsets $S_i\subseteq\mathcal X$ can be requested, the competitive ratio possible is $|\mathcal X|-1$ in any metric space \cite{manasse1990competitive}. One also considers the slightly more general \emph{metrical task systems} problem in which requests are non-negative cost functions $f_n:\mathcal X\to\mathbb R^+$ rather than sets and the cost takes the form
\[
	\cost(x_1,\dots,x_N)=\sum_{n=1}^N d_{\mathcal X}(x_n,x_{n-1})+f_n(x_n)
\] 
where $\sum_{n=1}^N d_{\mathcal X}(x_n,x_{n-1})$ is called the \emph{movement cost} while $\sum_{n=1}^N f_n(x_n)$ is the \emph{service cost}. As in~\eqref{eq:CR}, one aims to ensure 
\begin{equation}\label{eq:CR2}
	\cost(x_1,\dots,x_N)\leq \alpha\cdot \cost(f_1,\dots,f_n)=\alpha\cdot \inf_{(y_n\in \mathcal X)_{n\leq N}}\cost(y_1,\dots,y_N).
\end{equation}
The competitive ratio of metrical task systems is always $2|\mathcal X|-1$ \cite{borodin1992optimal}. Actually both competitive ratios just stated are for deterministic algorithms; one may also allow external randomness, so that one chooses $x_n=x_n(S_1,\dots,S_n,\omega)$ for some random variable $\omega$ independent of the sets $S_i$. One then aims for the same guarantee as in~\eqref{eq:CR},~\eqref{eq:CR2} with the expected cost of the player on the left-hand side, for any fixed sequence $(S_1,\dots,S_N)$. With randomization the competitive ratio of metrical task or service systems sharply drops and is known to be in the range $\left[\frac{c_1\log |\mathcal X|}{\log\log|\mathcal X|},c_2\left(\log|\mathcal X|\right)^2 \right]$, and to be $\Theta(\log |\mathcal X|)$ in some specific cases \cite{bartal2005metric,bartal2006ramsey,fiat2003better,bubeck2019metrical}. However this is not the end of the story as a wide range of problems, including chasing convex bodies, result from restricting which subsets are allowed as requests. The literature on such problems is vast and includes scheduling \cite{graham1966bounds}, self-organizing lists \cite{sleator1985amortized}, efficient covering \cite{alon2003online}, safely using machine-learned advice \cite{blum2000line,kumar2018improving,lykouris2018competitive,wei2020optimal}, and the famous $k$-server problem \cite{manasse1990competitive,grove1991harmonic,koutsoupias1995k,bansal2015polylogarithmic}.

Chasing convex bodies was proposed in \cite{friedmanlinial} to study the interaction between convexity and metrical task systems. Of course the general upper bounds above are of no use as $|X|=\infty$, while the lower bounds also do not apply due to the convexity constraint. \cite{friedmanlinial} gave an algorithm with finite competitive ratio for the already non-trivial $d=2$ case and conjectured that the competitive ratio is finite for any $d\in\mathbb N$. The best known asymptotic lower bounds come from requesting the faces of a hypercube by taking $K_n=(\varepsilon_1,\varepsilon_2,\dots,\varepsilon_n)\times [-1,1]^{d-n}$ for $\varepsilon_i\in \{-1,1\}$ uniformly random and $n\leq d$. This construction implies that the competitive ratio is at least $\sqrt{d}$ in Euclidean space and at least $d$ for $X=\ell^{\infty}$ - see \cite[Lemma 5.4]{chasingnested2018} for more on lower bounds. Unlike in many competitive analysis problems, randomization is useless for chasing convex bodies and we may freely restrict attention to deterministic algorithms. This is because $\cost(x_1,\dots,x_N)$ is convex on $X^N$, and so randomized paths are no better than their (deterministic) pointwise expectations. 

Following a lack of progress on the full conjecture, restricted cases such as chasing subspaces were studied, e.g. \cite{2016chasing}. A notable restriction is chasing \emph{nested} convex bodies, where the convex sets $K_1\supseteq K_2\supseteq \dots $ are required to decrease. Nested chasing was introduced in \cite{nestedarechaseable} and solved rather comprehensively in \cite{ABCGL} and then \cite{chasingnested2018}. The latter work gave an algorithm with optimal competitive ratio up to $O(\log d)$ factors for all $\ell^p$ spaces based on Gaussian-weighted centroids. Moreover it gave a $d$-competitive memoryless algorithm based on the Steiner point which we discuss later.

Some time after chasing convex bodies was posed, an equivalent problem called \emph{chasing convex functions} emerged. This is a metrical task systems problem in which requests are convex functions $f_n:X\to\mathbb R^+$ instead of convex sets. As described above the total cost
\[
	\cost(x_1,\dots,x_N)=\sum_{n=1}^N ||x_n-x_{n-1}||+f_n(x_n)
\] 
decomposes as a movement cost plus a service cost. Chasing convex functions subsumes chasing convex bodies by replacing the body $K_n$ with the function $f_n=2\cdot d(x,K_n)$. This is because an arbitrary algorithm for the requests $f_n$ is improved by projecting $x_n$ onto $K_n$ - actually the same argument shows more generally that metrical task systems subsumes metrical service systems. Conversely as observed in \cite{chasingconvex2018}, convex function chasing in $X$ can be reduced to convex body chasing in $X\oplus \mathbb R$ up to a constant factor by alternating requests of the epigraphs $\{(x,y)\in X\times \mathbb R: y\geq f_n(x)\}$ with the hyperplane $X\times \{0\}$. As with chasing convex bodies, randomized algorithms are no better than deterministic algorithms since $\cost(x_1,\dots,x_N)$ remains convex on $X^N$.

Convex function requests allow one to model many practical problems. Indeed chasing convex functions was originally considered as a model for efficient power management in cooling data centers \cite{datacenters}. In light of this, restricted or modified versions of chasing convex function have also been studied. For example, \cite{1dSOCO} determines the exact competitive ratio in $1$ dimension, while works such as \cite{SOCObalanced,goel2019beyond} show dimension-independent competitive ratios for similar problems with further restrictions on the cost functions.

\paragraph{Main Result}

In prior joint work with S. Bubeck, Y.T. Lee, and Y. Li \cite{chasingconvex2018} we gave the first algorithm achieving a finite competitive ratio for convex body chasing. Unfortunately this algorithm used an induction on dimension that led to a exponential competitive ratio $2^{O(d)}$. We give an upper bound of $d$ for the competitive ratio of chasing convex bodies in a general normed space, which is tight for $\ell^{\infty}$. In Euclidean space, our algorithm has competitive ratio $O(\sqrt{d\log N})$, nearly matching the lower bound $\sqrt{d}$ when the number of requests $N$ is sub-exponential in $d$. The statement following combines Theorems~\ref{thm:linearCR} and~\ref{thm:dlogN}.

\begin{thm}
\label{thm:main}
In any $d$-dimensional normed space there is a $d+1$ competitive algorithm for chasing convex functions and a $d$ competitive algorithm for chasing convex bodies. Moreover in Euclidean space this algorithm is $O(\sqrt{d\log N})$-competitive.

\end{thm}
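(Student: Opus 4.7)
My plan is to realize the strategy sketched in the abstract. First, introduce the \emph{work function} $w_n:\mathbb{R}^d\to\mathbb{R}\cup\{+\infty\}$, defined recursively by $w_0(x)=\|x-x_0\|$ and
\[
w_n(x)=f_n(x)+\inf_{y}\bigl(w_{n-1}(y)+\|y-x\|\bigr),
\]
so that each $w_n$ is convex (inf-convolution preserves convexity) and $\min_x w_n(x)$ equals the offline optimum after the first $n$ requests. Next, define a functional Steiner point $\mathrm{st}(f)$ for a convex function $f$, reducing to the classical Steiner point of a convex body $K$ when $f$ is (a sharp finite approximation of) the indicator $\iota_K$. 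The natural candidate is
\[
\mathrm{st}(f):=\mathbb{E}_u\bigl[\arg\min_x\bigl(f(x)-\langle u,x\rangle\bigr)\bigr]=\mathbb{E}_u[\nabla f^*(u)],
\]
where $u$ is averaged uniformly over the unit ball of the dual norm with the appropriate normalization, and $f^*$ is the Fenchel conjugate. The online algorithm then plays $x_n:=\mathrm{st}(w_n)$.

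The analysis splits into two telescoping bounds. For the \emph{service} cost, I would aim to prove that $f_n(x_n)\leq\min w_n-\min w_{n-1}$; summing gives $\sum_n f_n(x_n)\leq\min w_N-\min w_0\leq\mathrm{OPT}$, which is the claimed $1$-competitiveness. Rewriting $f_n(x_n)=w_n(x_n)-\tilde w_{n-1}(x_n)$ with $\tilde w_{n-1}(x):=\inf_y(w_{n-1}(y)+\|y-x\|)$, this reduces to showing that the Steiner point is at least as ``close to optimal'' for $w_n$ as for $\tilde w_{n-1}$, which should follow from an envelope identity relating $\mathrm{st}(f)$ to an average subdifferential on the dual side. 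For the \emph{movement} cost, the core property I need is a Lipschitz bound of the form
\[
\|\mathrm{st}(f)-\mathrm{st}(g)\|\leq d\cdot\|f-g\|_\infty,
\]
or a sharper version against a distance on convex functions that is compatible with inf-convolution by $\|\cdot\|$. Telescoping this against the way $w_{n-1}\mapsto w_n$ increments the work function yields $\sum_n\|x_n-x_{n-1}\|\leq d\cdot\mathrm{OPT}$. Adding the two bounds gives competitive ratio $d+1$ for convex functions; for bodies, applying the algorithm to the (finitely approximated) indicators $f_n=\iota_{K_n}$ eliminates the service term and improves the ratio to $d$.

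The hard part will be pinning down the definition of $\mathrm{st}(f)$ so that the Lipschitz constant is \emph{exactly} $d$, rather than merely $O(d)$, in the functional sense compatible with the work-function dynamics. This should require a careful integration by parts in the dual variable $u$, mirroring how the classical Steiner point saturates $\|\mathrm{st}(K)-\mathrm{st}(K')\|\leq d\cdot d_H(K,K')$ for bodies, and should explain the tightness on $\ell^\infty$. For the Euclidean refinement to $O(\sqrt{d\log N})$, I would replace the $L^\infty$-Lipschitz bound by the sharper $O(\sqrt{d})$ Hausdorff-to-Euclidean Lipschitz bound for the Steiner point in $\ell^2$, and absorb the remaining $\sqrt{\log N}$ factor through a chaining/Dudley-type argument across the $N$ work-function updates.
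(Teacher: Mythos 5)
Your high-level plan is correct and matches the paper's strategy: define a functional Steiner point $\mathrm{st}(f)=\mathbb{E}_u[\nabla f^*(u)]$ (equivalently, the dual-surface integral $-d\dashint_{\theta} f^*(\theta)\theta\, d\theta$), play $\mathrm{st}(w_n)$, and bound the two cost terms separately. But both of your telescoping arguments have gaps where the crucial structural facts are missing or misidentified.

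For the movement cost, the proposed Lipschitz bound $\|\mathrm{st}(f)-\mathrm{st}(g)\|\leq d\,\|f-g\|_\infty$ is true (the conjugate is $1$-Lipschitz in $L^\infty$), but it does not telescope to $d\cdot\mathrm{OPT}$: the increments $\|w_n-w_{n-1}\|_\infty$ each contribute up to $\Theta(\min_x w_n(x))$, so the sum scales with $N$. What actually makes the telescope close is a different pair of facts that you do not invoke. First, the dual-surface formula gives the $L^1$ bound $\|\mathrm{st}(w_n)-\mathrm{st}(w_{n-1})\|\leq d\dashint_{\theta}|w_n^*(\theta)-w_{n-1}^*(\theta)|\,d\theta$. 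Second, and critically, the concave conjugate $w_n^*(\theta)$ is \emph{monotone increasing in $n$ for every fixed $\theta$} (because $w_n\geq w_{n-1}$ pointwise), so the absolute values drop and the sum collapses to $d\dashint_{\theta} w_N^*(\theta)\,d\theta$, which concavity bounds by $d\,w_N^*(0)=d\cdot\mathrm{OPT}$. The relevant ``distance compatible with inf-convolution'' is exactly this one-sided $L^1$ quantity on the conjugate, and the monotonicity is the point; any symmetric Lipschitz bound, however sharp, cannot do the job.

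For the service cost, the inequality $f_n(x_n)\leq\min w_n-\min w_{n-1}$ is too strong and is not what holds. The correct statement averages over the dual ball rather than evaluating at the origin: using the primal definition $x_n=\dashint_{v\in B_1^*} v_n^*\,dv$ and convexity of $f_n$, one gets $f_n(x_n)\leq\dashint_v f_n(v_n^*)\,dv\leq\dashint_v\bigl(w_n^*(v)-w_{n-1}^*(v)\bigr)dv$, where the second inequality is an envelope bound coming from the fact that $v_{n-1}^*$ minimizes $w_{n-1}(\cdot)-\langle v,\cdot\rangle$. This telescopes to $\dashint_v w_N^*(v)\,dv\leq w_N^*(0)=\mathrm{OPT}$ by concavity of $w_N^*$. (The paper performs this step in continuous time via Lemma~\ref{lem:derivativeWt} and then reduces by Proposition~\ref{prop:continuoustodiscrete}, but a direct discrete envelope argument as above also works; your version with $\min w_n - \min w_{n-1}$ does not compare to the averaged increment and is not the right quantity.)

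Finally, the Euclidean $O(\sqrt{d\log N})$ improvement is not a Dudley-chaining argument against a $\sqrt{d}$-Hausdorff-Lipschitz constant. The paper instead uses concentration of measure on $\mathbb S^{d-1}$: when the average conjugate increment $\lambda_n=\dashint_\theta(w_n^*-w_{n-1}^*)\,d\theta$ is small relative to $C=2\cdot\mathrm{OPT}$ and the pointwise increment is at most $C$, the spherical integral defining the displacement decorrelates, giving a per-step bound $O\bigl(\lambda_n C\sqrt{d\log(C/\lambda_n)}\bigr)$; Jensen on the concave function $x\mapsto x\sqrt{\log(1/x)}$ then yields $O(C\sqrt{d\log N})$. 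Your chaining proposal would need substantial reworking to produce this.
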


The proof is inspired by our joint work with S. Bubeck, B. Klartag, Y.T. Lee, and Y. Li \cite{chasingnested2018} on chasing nested convex bodies. It is shown there that moving to the new body's \emph{Steiner point}, a stable center point of any convex body defined in \cite{steineroriginal}, gives total movement at most $d$ starting from the unit ball in $d$ dimensions. (The argument in \cite{chasingnested2018} is restricted to Euclidean space but the proof works in general as we will explain.) We extend their argument by defining the \emph{functional Steiner point} of a convex function. Our algorithm follows the functional Steiner point of the so-called \emph{work function} which encodes at any time the effective total cost of all requests so far.

We remark that given the form of \eqref{eq:CR}, chasing convex bodies may be viewed as an online version of a Lipschitz selection problem. In the broadest generality, for some family $\mathcal S\subseteq 2^X$ of subsets of a set $X$, a selector takes sets $S\in\mathcal S$ to elements $s\in S$. Of course the relevant comparison for us is when $\mathcal S$ consists of all convex bodies in $X$. Continuity and Lipschitz properties of general selectors have received significant attention \cite{shvartsman1984lipshitz,shvartsman2002lipschitz,kupka2005continuous,fefferman2018sharp}. Taking the Hausdorff metric on convex sets, the Steiner point is $d$-Lipschitz in any normed space. Moreover as explained in \cite[Section 4]{steinerlipschitz}, it achieves the exact optimal Lipschitz constant (of order $\Theta(\sqrt{d})$) when $X$ is Euclidean due to a beautiful symmetrization argument. We find it appealing that this in some sense optimal Hausdorff-Lipschitz selector also solves an online version of Lipschitz selection.

Concurrently with this work, C.J. Argue, A. Gupta, G. Guruganesh, and Z. Tang obtained similar results for chasing convex bodies in Euclidean space \cite{AGGT}. Their algorithm is based on Steiner points of level sets of the work function; these turn out to be almost the same as the functional Steiner point as we show in Section~\ref{sec:unify}.

\section{Problem Setup}

\subsection{Notations and Conventions}

The variables $T,t,s$ denote real times while $N,n$ denote integer times. $\dashint_{x\in S} f(x)\dd x$ denotes the average value $\frac{\int_{x\in S} f(x)\dd x}{\int_{x\in S} 1\dd x}$ of $f(x)$ on the set $S$. Denote by $B_1\subseteq X$ the unit ball and $B_1^*\subseteq X^*$ the dual unit ball. The symbol $\partial$ denotes boundary, and $\langle \cdot,\cdot\rangle$ denotes the natural pairing between $X,X^*$.

\subsection{Continuous Time Formulation}
\label{subsec:reductions}

Our proof is more natural in continuous time, so we first solve the problem in this setting and then specialize to discrete time. In continuous time chasing convex functions, we receive a locally bounded family of non-negative convex functions $(f_t:X\to\mathbb R^+)_{t\in [0,T]}$. We assume that $f_t(x)$ is piece-wise continuous in $t$ with a locally finite set of continuities. The player constructs a bounded variation path $(x_t)$ online, so that $x_s$ depends only on $(f_t)_{t\leq s}$. We will assume $f_t$ and $x_t$ are cadlag (right-continuous with left-limits) in the time variable $t$. The cost is again the sum of movement and service costs given by
\[
\cost((x_t)_{t\in [0,T]})=\int_{0}^T f_t(x_t)+||x'_t||\dd t.
\]
Here and throughout, the integral of $||x'_t||$ is understood to mean the total variation of the path $x_t$. As before the goal is to achieve a small competitive ratio against the optimal offline path. Given a sequence $f_1,f_2,\dots, f_N$ of convex requests, one readily obtains a corresponding continuous-time problem instance by choosing, for each $t\in [0,N]$, the function $f_t=f_{n}$ for $t\in (n-1,n]$. The next proposition shows that solving this continuous problem suffices to solve the discrete problem.

\begin{prop}
\label{prop:continuoustodiscrete}

Any discrete-time instance of chasing convex function has the same offline optimal cost as its continuous-time counterpart. Meanwhile for any continuous-time online algorithm there exists a discrete-time online algorithm achieving both smaller movement and smaller service cost on every sequence of functions $f_1,\dots,f_N$.

\end{prop}

\begin{proof}

It is easy to see that the continuous and discrete time problems have the same offline optimum value. Given a solution for continous-time convex function chasing, suppose the player sees a discrete time request $f_n$. The player then computes the continuous time path $(x_t)_{t\in (n-1,n]}$ and moves to some $x_{t_n}$ with $t_n\in (n-1,n]$ and 
\[
	f_n(x_{t_n})\leq \int_{n-1}^n f_n(x_t)\dd t.
\] 
The discretized sequence $(x_{t_1},\dots, x_{t_N})$ has a smaller movement cost than the continuous path $(x_t)_{t\in [0,T]}$ because the triangle inequality implies
\begin{align*}
	\sum_{n=1}^N ||x_{t_n}-x_{t_{n-1}}||&\leq \sum_{n=1}^N \int_{t_{n-1}}^{t_n} ||x'_s||\dd s\\
	&=\int_0^{t_N}||x'_s||\dd s\\
	&\leq \int_0^N ||x'_s||\dd s.
\end{align*}
The discretized path also has smaller service cost by construction, hence the result.

\end{proof}

\section{Functional Steiner Point and Work Function}
\label{sec:setup}

We begin by recalling the definition of the Steiner point in a $d$-dimensional normed space $X$. For a convex body $K\subseteq X$ and $v\in X^*$, define 
\begin{align*}
	f_K(v)&=\arg\max_{x\in K}\langle v,x\rangle,\\
	h_K(v)&=\max_{x\in K}\langle v,x\rangle=\langle f_K(v),x\rangle.
\end{align*}
$h_K$ is commonly known as the \emph{support function} of $K$. Let $\mu$ denote the cone measure on $\partial B_1^*$, which can be sampled from by choosing a uniformly random $z\in B^*_1$ and normalizing to $\theta=\frac{z}{||z||}.$ For $\theta\in\partial B_1^*$ define $n(\theta)\in X$ to be the outward unit normal defined (for $\mu$-almost all $\theta$) by $||n(\theta)||=1$ and $\langle n(\theta),\theta\rangle = 1$. 

\begin{defn}\cite[Chapter~6]{steinerlipschitz}\label{defn:steiner}
The Steiner point $s(K)\in X$ is
\begin{align} 
	s(K)&=\dashint_{v\in B^*_1} f_K(v) \dd v.\label{eq:primal0}\\
	&=d~\dashint_{\theta\in \partial B^*_1} h_K(\theta)n(\theta) \dd \mu(\theta).\label{eq:dual0} 
\end{align}
\end{defn}

The equivalence of the two definitions follows from the divergence theorem and the identity $\nabla h_K=f_K$. The factor $d$ comes from the discrepancy in total measure of the ball and the sphere. See \cite[Chapter 6]{steinerlipschitz} for a careful derivation. 

Using Definition~\ref{defn:steiner}, the upper bound $d$ for nested chasing in \cite{chasingnested2018} immediately extends to any normed space. We recall the main result here. It is not phrased as a competitive ratio because some apriori reductions are possible in nested chasing --- roughly speaking we stay inside the unit ball $B_1$ and treat the offline optimum cost as being $1$. Note that both \eqref{eq:primal0} and \eqref{eq:dual0} are essential in the argument below.

\begin{thm}\cite[Theorem 2.1]{chasingnested2018}\label{thm:nested}
Let $B_1\supseteq K_1\supseteq K_2\supseteq \dots\supseteq K_N$ be convex bodies in $X$, with $x_n=s(K_n)$ for each $n$. Then $x_n\in K_n$ for each $n$ and
\[
	\sum_{n=1}^{N} ||x_n-x_{n-1}||\leq d.
\]
\end{thm}

\begin{proof}

It follows from \eqref{eq:primal0} that $s(K)\in K$, so it remains to estimate the total movement. For convenience take $K_0=B_1$ the unit ball so that $x_0=(0,0,\dots, 0)=s(K_0)$. From $K_n\subseteq K_{n-1}$ it follows that $h_{K_n}(\theta)\leq h_{K_{n-1}}(\theta)$ for each $n\leq N$ and $\theta\in\partial B_1^*$. Combining with~\eqref{eq:dual0} yields:
\begin{align*} 
	\sum_{n=1}^N ||s(K_n)-s(K_{n-1})||&\leq d~\dashint_{\theta\in \partial B_1^*} \sum_{n=1}^N |h_{K_n}(\theta)-h_{K_{n-1}}(\theta)| \dd\mu(\theta)\\
	&= d~\dashint_{\theta\in \partial B_1^*} \sum_{n=1}^N h_{K_{n-1}}(\theta)-h_{K_{n}}(\theta) \dd\mu(\theta)\\
	&= d~\dashint_{\theta\in \partial B_1^*} 1-h_{K_{N}}(\theta) \dd\mu(\theta)\\
	&\leq d.
\end{align*}
Here the last inequality follows from $h_{K_N}(\theta)+h_{K_N}(-\theta)\geq 0$.

\end{proof}

We now extend the definition of Steiner point to convex functions. The idea is to replace the support function by the concave conjugate (also known as the Fenchel-Legendre transform). Recall that for a convex function $W:X\to\mathbb R^+$, the concave conjugate $W^*:X^*\to \mathbb R\cup \{-\infty\}$ is defined by
\begin{equation}\label{eq:W*}
	W^*(v)=\inf_{w\in X} \left(W(w)-\langle v,w\rangle\right)
\end{equation}
Let us assume $W$ is not only convex but also $1$-Lipschitz, and that $W(w)-||w||$ is uniformly bounded. We will refer to such a $W$ as an (abstract) work function. Note $W^*(v)$ is finite whenever $||v||<1$ by the last assumption, and moreover the infimum in \eqref{eq:W*} is attained. We denote this point attaining this infimum by
\[
    v^*=\arg\min_{w\in X} \left(W(w)-\langle v,w\rangle\right),
\]
the conjugate point to $v$ with respect to $W$. It satisfies $\nabla W(v^*)=v$ and is well-defined for almost every $v\in B_1^*$ by Alexandrov's theorem. Moreover we have $\nabla W^*(v)=-v^*$. Combining this latter relation with the divergence theorem yields another identity, from which the functional Steiner point is defined. 

\begin{defn}

Let $X$ be an arbitrary $d$-dimensional normed space, and $W:X\to\mathbb R^+$ a work function as defined above. The functional Steiner point $s(W)\in X$ is:
\begin{align}
	s(W)&=\dashint_{v\in B^*_1} v^* \dd v. \label{eq:primal}\\
	&= -d~\dashint_{\theta\in \partial B^*_1} W^*(\theta)n(\theta) \dd \mu(\theta). \label{eq:dual}
\end{align}

\end{defn}

We remark that if a convex body $K$ is identified with the function $f(x)=d(x,K)$, then the definitions above agree. We call \eqref{eq:primal0},~\eqref{eq:primal} the \emph{primal} definitions and \eqref{eq:dual0},~\eqref{eq:dual} the \emph{dual} definitions.

\subsection{The Work Function}

The work function is a central object in online algorithms; in general it records the smallest cost required to satisfy an initial sequence of requests while ending in a given state. Work function based algorithms are essentially optimal among deterministic algorithms for general metrical task systems \cite{borodin1992optimal} as well as the $k$-server problem \cite{koutsoupias1995k}.

\begin{defn}\label{def:work}

Given requests $(f_s)_{s\leq t}$, the work function $W_t(x)$ is the offline-optimal cost among paths satisfying $x_t=x$:

\begin{align}
W_t(x)=&  \inf_{\substack{x_s:[0,t]\to X\\x_t=x}}||x_0||+ \int_0^t f_s(x_s)+||x'_s||\dd s\\
=& \inf_{\substack{x_s:[0,t]\to X\\x_t=x}} \cost_t(x_s).\label{eq:workcont}
\end{align}

Here we allow $x_s:[0,t]\to X$ to be any path of bounded variation, and as before interpret $\int_0^t ||x'_s||\dd s$ to mean the total variation of the path. Likewise for a discrete-time request sequence $(f_1,\dots,f_n)$, the work function $W_n(x)$ is defined as above with $f_t=f_n$ for $t\in (n-1,n]$ or more simply by
\[
	W_n(x)=\min_{x_1,\dots,x_n\in X} ||x-x_n||+\sum_{n=1}^N ||x_n-x_{n-1}||+f_n(x_n).
\] 
For a sequence $(K_1,\dots,K_n)$ of convex set requests the work function $W_n$ is defined analogously.

\end{defn}

In the case that $f_s(x)$ is piecewise constant in $s$ (which is all we need for the original discrete-time problem), the best offline continuous time strategy clearly coincides with the best offline discrete time strategy. The infimum is attained in~\eqref{eq:workcont} in general because the paths $(x_s)_{s\leq t}$ of variation at most $C$ satisfying $x_t=T$ are compact in the usual topology on cadlag functions for any $C$, and $\cost_t$ is lower semicontinuous. 

Denote by $W^*_t(\cdot)$ the concave conjugate of $W_t$, and $v^*_t$ the point with $\nabla W_t(v^*_t)=v$. We record the following proposition summarizing the properties of the work function and its dual.

\begin{prop}
\label{prop:properties}
In either discrete or continuous time, $W_t$ and $W^*_t$ satisfy:
\begin{enumerate}
    \item \label{it:work1} $W_0(x)=||x||.$ 
    \item \label{it:work2} $W^*_0(v)=0$ whenever $||v||\leq 1$.
    \item \label{it:work3} $W_t(x)$ is increasing in $t$ and is convex for all fixed $t$.
    \item \label{it:work4} $W^*_t(x)$ is increasing in $t$ and concave in $x$.
    \item \label{it:work4} $W_t(x)$ is an abstract work function.
    \item \label{it:work5} $W^*_t(v)$ is non-negative and finite whenever $||v||\leq 1.$
    \item \label{it:work6} $\cost((f_s)_{s\in[0,t]})=\min_{x\in X} W_t(x)$.
\end{enumerate}

\end{prop}

\begin{proof}

It is clear that $W_0(x)=||x||$, and that $W_t(x)$ is increasing in $t$. The computation of $W^*_0$ is clear. Convexity of $W_t(\cdot)$ holds by convexity of $\cost_t(\cdot)$ --- given paths $x_s^0:[0,t]\to X$ and $x_s^1:[0,t]\to X$ the path $x_s^q:[0,t]\to X$ given by
\[
	x_s^q=qx_s^1+(1-q)x_s^0
\]
satisfies for any $q\in [0,1]$,
\[
	\cost_t(x_s^q)\leq q\cdot\cost_t(x_s^1)+(1-q)\cdot\cost_t(x_s^0).
\]
Convexity of $W_t$ implies that $W^*_t$ is concave by general properties of the Fenchel-Legendre transform.
Because $W_t$ is increasing in $t$, the definition~\eqref{eq:W*} implies that $W^*_t$ is increasing in $t$ as well. 
It is easy to see that $W_t$ is $1$-Lipschitz; to show
\[
	W_t(x)\leq W_t(y)+||x-y||
\]
it suffices to take the lowest cost path to $y$ and then move from $y$ to $x$. Similarly $W_t(x)-||x||$ is bounded, making $W_t$ an abstract work function. It follows from this that $W^*_t(v)$ is finite when $||v||\leq 1$.

\end{proof}

\begin{lem}
\label{lem:boundondual}
For all $t$,
\begin{align*}
	\max_{||\theta||\leq 1} W^*_t(\theta)&\leq  2\cdot \min_{x}W_t(x),\\
	\dashint_{\theta\in \partial B_1^*} W^*_t(\theta) \dd\mu(\theta) &\leq \min_{x} W_t(x),\\
	\dashint_{v\in B_1} W^*_t(v) \dd v &\leq \min_{x} W_t(x).
\end{align*}
\end{lem}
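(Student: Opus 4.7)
The plan is to exploit the definition $W^*_t(v)=\min_w(W_t(w)-v^T w)$ by plugging in a single well-chosen test point $w$, and then to use the symmetry of the ball and sphere to handle the two averaged bounds.

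Let $x^*$ denote a minimizer of $W_t$ and set $m=W_t(x^*)=\min_x W_t(x)$. The first step is to observe that $||x^*||\leq m$. Indeed, an optimal path from $x_0=0$ ending at $x^*$ has total cost $m$; since the service cost contribution to $\Gamma_t$ is non-negative, the total variation of the path is bounded by $m$; and the total variation of any path from $0$ to $x^*$ is at least $||x^*||$.

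With this in hand, plugging $w=x^*$ into the definition of the concave conjugate gives the single pointwise inequality
\[ W^*_t(v) \leq W_t(x^*)-v^T x^* = m - v^T x^*, \]
from which all three bounds will follow. For the first, Cauchy--Schwarz combined with $||x^*||\leq m$ yields $|v^T x^*|\leq m$ whenever $|v|\leq 1$, hence $W^*_t(v)\leq 2m$ for all such $v$. For the other two, integrating the linear term $-v^T x^*$ over $\theta\in\mathbb S^{d-1}$ or over $v\in B_1$ gives zero by the symmetry of each region about the origin, and so the corresponding averages of $W^*_t$ are bounded above by $m$.

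I do not anticipate any real obstacle. The lemma collapses to the single geometric observation that $||x^*||\leq m$, which is an immediate consequence of the non-negativity of both the service and movement contributions to the cost functional $\Gamma_t$. The choice $w=x^*$ as a test point is essentially forced, and the remaining manipulations are elementary.
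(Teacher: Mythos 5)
Your argument is correct and is essentially the same as the paper's: both proofs plug the optimal point $x^*=\arg\min_x W_t(x)$ into the defining minimum for $W^*_t$, use $\|x^*\|\le W_t(x^*)$ (which you justify slightly more explicitly via the movement cost), and then handle the sphere and ball averages by the vanishing of the linear term under symmetric integration. The paper additionally notes the shortcut that the last two bounds follow from concavity of $W^*_t$ together with $W^*_t(0)=\min_x W_t(x)$, but this is only an alternative to, not a deviation from, the route you took.
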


\begin{proof}

Set
\[
	OPT_t=\arg\min_x W_t(x).
\]
The definition \eqref{eq:W*} of $W^*_t$ implies
\[
	W^*_t(\theta) \leq W_T(OPT_t)-\theta\cdot OPT_t.
\]
Finally
\begin{align*}
	|W_t(OPT_t)|&= \inf_{\substack{x_s:[0,t]\to X\\x_t=OPT_t}}||x_0||+ \int_0^t f_s(x_s)+||x'_s||\dd s\\
	&\geq \inf_{\substack{x_s:[0,t]\to X\\x_t=OPT_t}}||x_0||+\int_0^t ||x'_s||\dd s\\
	&\geq |OPT_t|
\end{align*}
holds where the triangle inequality was used in the last line. All assertions now follow. 

\end{proof}

We next compute the time derivative of $W^*_t(v)$ for fixed $v$ with $|v|<1$. The proof, a simple exercise, is left to the appendix.

\begin{lem}
\label{lem:derivativeWt}

For any $\varepsilon>0$ suppose $f_s(x)$ is jointly continuous in $(s,x)$ and convex in $x$ for $(s,x)\in [t,t+\varepsilon)\times X$. Then for almost all $v$ with $||v||<1$,

\[\frac{\dd}{\dd t} W^*_t(v) = f_t(v^*_t).\]

\end{lem}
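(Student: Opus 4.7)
The plan is to combine an envelope-theorem argument with a pointwise derivative identity for $W_t(w)$ at $w = v^*_t$. First I would fix $v$ with $|v| < 1$ such that $v^*_t$ is uniquely defined; applied to the concave function $W^*_t$, Alexandrov's theorem shows this holds for almost every such $v$, which is the source of the ``almost all'' qualifier. Existence of $v^*_t$ follows since $W_t(w) \geq W_0(w) = ||w||$ makes the objective $W_t(w) - v \cdot w$ coercive when $|v| < 1$. Sandwiching $W^*_{t+\delta}(v) - W^*_t(v)$ by plugging $v^*_t$ into the $(t+\delta)$-minimization and $v^*_{t+\delta}$ into the $t$-minimization gives
\[W_{t+\delta}(v^*_{t+\delta}) - W_t(v^*_{t+\delta}) \;\leq\; W^*_{t+\delta}(v) - W^*_t(v) \;\leq\; W_{t+\delta}(v^*_t) - W_t(v^*_t),\]
so the task reduces to showing $\delta^{-1}[W_{t+\delta}(w) - W_t(w)] \to f_t(w)$ as $\delta \to 0$, uniformly for $w$ in a neighborhood of $v^*_t$ on which the first-order condition $\nabla W_t(w) = v$ forces $|\nabla W_t(w)| < 1$ strictly.

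For the upper bound I would use the suboptimal strategy of reaching $w$ optimally by time $t$ and then sitting at $w$, which yields
\[W_{t+\delta}(w) - W_t(w) \;\leq\; \int_t^{t+\delta} f_s(w)\, ds \;=\; f_t(w)\, \delta + o(\delta)\]
by joint continuity of $f$ in $(s,x)$.

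The lower bound is the heart of the argument. I would take an optimal path $x_s$ for $W_{t+\delta}(w)$ with $x_{t+\delta} = w$ and split its cost at time $t$. Using $\int_t^{t+\delta} ||x'_s||\, ds \geq ||w - x_t||$ together with the convexity inequality $W_t(x_t) \geq W_t(w) + v \cdot (x_t - w) \geq W_t(w) - |v|\cdot ||x_t - w||$ gives
\[W_{t+\delta}(w) - W_t(w) \;\geq\; (1 - |v|)\, ||x_t - w|| + \int_t^{t+\delta} f_s(x_s)\, ds.\]
The coefficient $1 - |v| > 0$ is exactly where strict inequality $|v| < 1$ is used: combined with the upper bound, it forces $||x_t - w|| = O(\delta)$. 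Because the total cost on $[t, t+\delta]$ is $O(\delta)$, so is the total path length on this interval, so $||x_s - w|| = O(\delta)$ for every $s \in [t, t+\delta]$. Joint continuity of $f$ then gives $\int_t^{t+\delta} f_s(x_s)\, ds = f_t(w)\, \delta + o(\delta)$, producing the matching lower bound.

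The main obstacle is precisely this lower-bound step, where the optimal path must be localized near $w$ on $[t, t+\delta]$. Without strict inequality $|v| < 1$ the localization breaks down: the factor $1 - |v|$ vanishes, the optimal path is free to drift, and the pointwise derivative can differ from $f_t(w)$. This is exactly why the proposition is stated for $|v| < 1$ rather than for the closed unit ball, and why the envelope identity is valid only on the open ball.
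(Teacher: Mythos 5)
Your sandwich framing via the Fenchel inequality is a genuinely different starting point from the paper, which works directly with $W^*_{t+\delta}(v)$ as a path minimization with a free, penalized endpoint. But the lower-bound half of your argument has real gaps.

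The intermediate claim that $\delta^{-1}\bigl[W_{t+\delta}(w) - W_t(w)\bigr]\to f_t(w)$ uniformly on a neighborhood of $v^*_t$ is false away from $v^*_t$. Take $d=1$, $f_s\equiv 0$ for $s<t$ (so $W_t(w)=|w|$), and $f_s(w)=M|w|$ for $s\geq t$. For any $w$ the offline optimum for $W_{t+\delta}(w)$ sits at the origin and jumps to $w$ at the last instant, so $W_{t+\delta}(w)=|w|=W_t(w)$, while $f_t(w)=M|w|>0$ for $w\neq 0$. The map $t\mapsto W_t(w)$ simply does not have derivative $f_t(w)$ at a generic $w$, because nothing prevents the offline path from dodging; it only has this derivative at the conjugate point.

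Tracing your derivation, the step that breaks is the convexity inequality $W_t(x_t)\geq W_t(w)+v\cdot(x_t-w)$, which requires $v\in\partial W_t(w)$. By the very uniqueness of $v^*_t$ you invoke from Alexandrov, this holds only at $w=v^*_t$, not at $w=v^*_{t+\delta}$ where the lower side of the sandwich needs it. Replacing $v$ by some $\xi_w\in\partial W_t(w)$ replaces the factor $1-|v|$ with $1-|\xi_w|$, but $|\xi_w|$ can equal $1$ arbitrarily close to $v^*_t$; for body chasing the set $\{w:|\nabla W_t(w)|<1\}$ has measure zero, so it is not a neighborhood of $v^*_t$. And even if the path were localized near $v^*_{t+\delta}$, you would land on $\int_t^{t+\delta} f_s(x_s)\,ds = f_t(v^*_{t+\delta})\delta+o(\delta)$ and still owe the stability $v^*_{t+\delta}\to v^*_t$.

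The paper avoids all of this by never separating the primal bound from the conjugate point. It writes
\[W^*_{t+\delta}(v)=\min_{x_s:[t,t+\delta]\to\mathbb R^d}\left(W_t(x_t)+\int_t^{t+\delta} f_s(x_s)+||x'_s||\,ds - v^T x_{t+\delta}\right)\]
with free endpoint, invokes the subgradient inequality only at $v^*_t$, and extracts more from Alexandrov than uniqueness: strict convexity of $W_t$ at $v^*_t$ supplies a gap term $f(|x_t-v^*_t|)$ bounded away from zero off any neighborhood of $v^*_t$, and that, not a $1-|\xi_w|$ factor, is what pins $x_t$ near $v^*_t$. The $1-|v|$ factor is then used only afterward to control $\max_{s\in[t,t+\delta]}|x_s-x_t|$.
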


\section{Linear Competitive Ratio}
\label{sec:linearCR}

Our algorithm for continuous-time convex function chasing is defined by setting $x_t=s(W_t)$. In its analysis, the primal definition~\eqref{eq:primal} controls the service cost while the dual definition~\eqref{eq:dual} controls the movement cost. 

\begin{thm}
\label{thm:linearCR}

$x_t=s(W_t)$ is $d+1$ competitive for continuous-time convex function chasing in any $d$-dimensional normed space $X$. In particular:

\begin{enumerate}
    \item The movement cost of $x_t$ is $d$-competitive:

    \[\int_{0}^T ||x'_t||\dd t \leq d\cdot\min_x W_t(x).\]

    \item The service cost of $x_t$ is $1$-competitive:

    \[\int_{0}^T f_t(x_t)\dd t \leq \min_x W_t(x).\]

\end{enumerate}

\end{thm}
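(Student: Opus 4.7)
The plan is to prove the two competitive ratio statements separately, using the primal definition of $s(W_t)$ for the service cost bound and the dual definition for the movement cost bound. In both cases, the final step is an application of Lemma~\ref{lem:boundondual} to convert averages of $W^*_T$ into a bound by $\min_x W_T(x) = W^*_T(0)$, which is exactly the offline optimum.

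For the service cost, I would use $x_t = \dashint_{v \in B_1} v^*_t\,dv$. Since $f_t$ is convex in its argument, Jensen's inequality gives $f_t(x_t) \leq \dashint_{v \in B_1} f_t(v^*_t)\,dv$. By Lemma~\ref{lem:derivativeWt}, for almost every fixed $v \in B_1$ we have $\frac{d}{dt} W^*_t(v) = f_t(v^*_t)$, and since $W^*_0(v) = 0$ on $B_1$ the fundamental theorem of calculus yields $\int_0^T f_t(v^*_t)\,dt = W^*_T(v)$. Exchanging the $t$-integral with the $v$-average (justified by non-negativity and Fubini) produces
\[\int_0^T f_t(x_t)\,dt \leq \dashint_{v\in B_1} W^*_T(v)\,dv \leq \min_x W_T(x),\]
the last inequality being the third bound in Lemma~\ref{lem:boundondual}.

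For the movement cost, I would use $x_t = -d\dashint_{\theta \in \mathbb{S}^{d-1}} W^*_t(\theta)\theta\,d\theta$. Differentiating in $t$ and moving the derivative inside the surface average gives $x'_t = -d\dashint_{\theta \in \mathbb{S}^{d-1}} \bigl(\partial_t W^*_t(\theta)\bigr)\theta\,d\theta$. The crucial observation is that $W^*_t$ is monotonically non-decreasing in $t$ (property~4 of the work function), so $\partial_t W^*_t(\theta) \geq 0$ almost everywhere. Applying the triangle inequality and $\|\theta\|=1$,
\[\|x'_t\| \leq d\dashint_{\theta \in \mathbb{S}^{d-1}} \partial_t W^*_t(\theta)\,d\theta.\]
Integrating over $t \in [0,T]$ and using $W^*_0 \equiv 0$ on the unit ball yields $\int_0^T \|x'_t\|\,dt \leq d\dashint_{\theta \in \mathbb{S}^{d-1}} W^*_T(\theta)\,d\theta \leq d \min_x W_T(x)$, again by Lemma~\ref{lem:boundondual}.

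The main obstacle is the analytic justification of the two differentiation-under-the-integral steps. Lemma~\ref{lem:derivativeWt} only guarantees the identity $\partial_t W^*_t(v) = f_t(v^*_t)$ for $|v|<1$; for the dual formula we integrate over the boundary $\mathbb{S}^{d-1}$, but since $W^*_t$ is concave and monotone in $t$ the boundary case can be handled by monotone convergence from the interior. A related subtlety is that $f_t$ is only piecewise continuous in $t$, so $x_t$ may have jumps at the breakpoints; these contribute to the total variation interpretation of $\int \|x'_t\|\,dt$, but because the bound is expressed as an integrated non-negative quantity and both sides add across pieces, the global bound still follows from summing the estimates on each continuity interval.
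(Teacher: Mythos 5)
Your proposal is correct and follows essentially the same route as the paper: the primal (ball-average) definition plus Jensen plus Lemma~\ref{lem:derivativeWt} for the service cost, and the dual (sphere-average) definition plus monotonicity of $W^*_t$ in $t$ plus Lemma~\ref{lem:boundondual} for the movement cost. The paper states the movement bound more tersely, while you spell out the differentiation-under-the-integral step and flag the boundary and piecewise-continuity subtleties, which the paper handles implicitly; none of this changes the argument.
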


Proposition~\ref{prop:continuoustodiscrete} yields an induced algorithm for chasing bodies/functions in discrete time which we call the discrete-time functional Steiner point.

\begin{cor}
\label{cor:discretelinearCR}

The discrete-time functional Steiner point is $d+1$ competitive for chasing convex functions and $d$ competitive for chasing convex bodies.

\end{cor}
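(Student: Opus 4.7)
The plan is to bound the service cost and movement cost of $x_t = s(W_t)$ separately, exploiting the two equivalent formulations of the functional Steiner point: the primal (volume-average) definition handles the service cost, and the dual (surface-average) definition handles the movement cost. In both cases Lemma~\ref{lem:derivativeWt} converts time differences of $W^*_t$ into integrals of $f_t(v^*_t)$, and Lemma~\ref{lem:boundondual} bounds the resulting averages of $W^*_T$ by $\min_x W_T(x)$.

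For the service bound, I would apply Jensen's inequality to the primal form $x_t = \dashint_{v \in B_1} v^*_t\,dv$ using convexity of $f_t$, obtaining $f_t(x_t) \leq \dashint_{v \in B_1} f_t(v^*_t)\,dv$. Invoking Lemma~\ref{lem:derivativeWt} identifies the integrand with $\tfrac{d}{dt}W^*_t(v)$, and swapping integrals yields
\[\int_0^T f_t(x_t)\,dt \;\leq\; \dashint_{v \in B_1} \bigl(W^*_T(v) - W^*_0(v)\bigr)\,dv \;=\; \dashint_{v \in B_1} W^*_T(v)\,dv \;\leq\; \min_x W_T(x),\]
using $W^*_0 \equiv 0$ on $B_1$ and Lemma~\ref{lem:boundondual}. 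For the movement bound, I would start from the dual form $x_t = -d\dashint_{\theta \in \mathbb{S}^{d-1}} W^*_t(\theta)\,\theta\,d\theta$, differentiate in $t$ under the integral sign to get $x'_t = -d\dashint_{\theta \in \mathbb{S}^{d-1}} f_t(\theta^*_t)\,\theta\,d\theta$, pull the norm inside via the triangle inequality, and use $|\theta| = 1$ together with $f_t \geq 0$ to get $\|x'_t\| \leq d\dashint_{\theta \in \mathbb{S}^{d-1}} f_t(\theta^*_t)\,d\theta$; integrating in $t$ and applying Lemma~\ref{lem:boundondual} then yields $\int_0^T \|x'_t\|\,dt \leq d\cdot\min_x W_T(x)$.

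For the corollary, I would invoke Proposition~\ref{prop:continuoustodiscrete} to pass to discrete time for the function statement, and combine this with the body-to-function reduction of Section~\ref{subsec:reductions} for the body statement: since OPT pays no service cost in body chasing, the algorithm's service cost can be absorbed into a projection step that lands each $x_n$ in $K_n$ at no cost comparable to OPT, leaving only the movement term and giving ratio exactly $d$. The main technical obstacle I foresee is justifying the derivative identity on the unit sphere, since Lemma~\ref{lem:derivativeWt} is stated for $|v|<1$; I expect this can be resolved by a dominated convergence argument as $|v|\uparrow 1$, together with a regularity check that $W^*_t$ remains well-behaved at the boundary of $B_1$ (using that $W_t$ grows at least linearly at infinity, so the conjugate point $\theta^*_t$ stays bounded uniformly in $t$).
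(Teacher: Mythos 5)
Your plan splits into two parts: a re-derivation of the continuous-time bounds of Theorem~\ref{thm:linearCR}, and then the discretization step that actually gives the corollary. The paper's own proof of the corollary is just to cite Proposition~\ref{prop:continuoustodiscrete} and observe that body chasing has zero service cost, taking Theorem~\ref{thm:linearCR} as given, so the comparison is mostly about your Theorem~\ref{thm:linearCR} derivation and your closing argument for the body case.

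Your service cost bound matches the paper exactly. Your movement bound takes a genuinely different route: you differentiate the dual formula in $t$, invoke Lemma~\ref{lem:derivativeWt} to identify the integrand as $f_t(\theta^*_t)\geq 0$, and then apply the triangle inequality. The paper instead never invokes Lemma~\ref{lem:derivativeWt} here; it bounds the total variation of $x_t$ by the total variation of $W^*_t(\theta)$ averaged over $\theta$, which equals $W^*_T(\theta)-W^*_0(\theta)$ simply because $W^*_t(\theta)$ is monotone increasing in $t$. This monotonicity argument is cleaner and, crucially, avoids the boundary issue you flag: Lemma~\ref{lem:derivativeWt} is stated for $|v|<1$ only, and your proposed limiting argument as $|v|\uparrow 1$ would need care since the optimal trajectory bounds in the appendix proof degenerate as $|v|\to 1$. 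You don't actually need the pointwise time derivative on $\mathbb S^{d-1}$; nonnegativity of the increments plus telescoping is enough, and that follows from monotonicity of $W^*_t$ directly.

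The genuine gap is in your body-chasing step. You propose projecting each $x_n$ onto $K_n$ and claim this costs nothing comparable to OPT. It does cost something: projecting $y_n\mapsto \mathrm{proj}_{K_n}(y_n)$ incurs extra movement $\leq 2\sum_n d(y_n,K_n)=\tfrac{2}{3}\sum_n f_n(y_n)\leq \tfrac{2}{3}\,\mathrm{OPT}$, which would degrade the ratio from $d$ to $d+\tfrac{2}{3}$. What actually makes the ratio exactly $d$ is that no projection is needed: the continuous-time trajectory $s(W_t)$ already lies in $K_t$ (this is shown explicitly later in the paper, in the unnamed corollary to Proposition 6.2), so when the discretization of Proposition~\ref{prop:continuoustodiscrete} picks the point on the path of minimal $f_n$-value, it picks a point of $K_n$ with service cost exactly zero. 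The ``fact that chasing convex bodies has $0$ service cost'' in the paper's one-line proof is referring to this, for both the algorithm and OPT. Your argument as written would yield $d+O(1)$ rather than $d$.
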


\begin{proof}[Proof of Corollary~\ref{cor:discretelinearCR}]

This follows from Proposition~\ref{prop:continuoustodiscrete} and the fact that chasing convex bodies has $0$ service cost.

\end{proof}

\begin{proof}[Proof of Theorem~\ref{thm:linearCR}]

We begin with part $1$. From the dual definition~\eqref{eq:dual} of $s(W_t)$ and the fact that $W^*_t$ increases with $t$ from $W^*_0=0$,
\begin{align*}
\int_{0}^T ||x'_t||\dd t &= d\cdot \int_0^T \left\vert\left\vert \frac{\dd}{\dd t} \dashint_{\theta\in\partial B_1^*}  W_t^*(\theta)\theta  \dd\mu(\theta) \right\vert\right\vert\\
&\leq d\cdot \int_0^T   \dashint_{\theta\in\partial B_1^*}  \left\vert \frac{\dd}{\dd t} W_t^*(\theta)   \right\vert \dd\mu(\theta)\\
&=  d\cdot\dashint_{\theta\in\partial B_1^*} W^*_T(\theta) \dd\mu(\theta).
\end{align*}
Lemma~\ref{lem:boundondual} implies \[d\cdot \dashint_{\theta\in\partial B_1^*} W^*_T(\theta) \dd\mu(\theta) \leq d\min_x W_T(x).\] This completes the proof of part $1$ and we turn to part $2$. From the primal definition~\eqref{eq:primal} and convexity of $f_t$ it follows that
\[
f_t(s(W_t)) \leq \dashint_{v\in B_1^*} f_t(v^*_t)\dd v .
\]
Integrating in time and using Lemmas~\ref{lem:derivativeWt} and~\ref{lem:boundondual} yields:
\begin{align*}
\int_{0}^T f_t(s(W_t))\dd t \leq & \dashint_{v\in B_1} \int_{0}^T f_t(v_t^*) \dd t\dd\mu(\theta) \\
= & \dashint_{v\in B_1^*}\int_{0}^T \frac{\dd}{\dd t} W^*_t(v) \dd t \dd v\\
= & \dashint_{v\in B_1^*} W^*_T(v)-W^*_0(v) \dd v\\
= & \dashint_{v\in B_1^*} W^*_T(v) \dd v\\
\leq & \min_x W_T(x).
\end{align*}

\end{proof}

\begin{rem}

In the continuous time setting, only $f_t(x_t)$ and $\nabla f_t(x_t)$ are actually necessary to solve convex function chasing. This is because the player can always lower bound $f_t$ by 

\[f_t(x)\geq \tilde f_t(x)\equiv \max\left(f_t(x_t)+\langle \nabla f_t(x_t), x-x_t\rangle,0\right).\]

As $\tilde f_t(x_t)=f_t(x_t)$, by simply pretending the requests are $\tilde f_t$, any competitive algorithm can be transformed into one which only uses the values $f_t(x_t)$ and $\nabla f_t(x_t)$ and which obeys the same guarantees.

In the discrete time setting, if we are given $f_n(x_{n-1})$ and $\nabla f_n(x_{n-1})$ before choosing $x_n$, there is another source of error because $f_n(x_n)$ is totally unknown. However this error is easily controlled when the $f_n$ are uniformly Lipschitz. Let $(x_n)_{n\leq N}$ be the discrete-time functional Steiner point sequence for the functions recursively defined by
\[
	\tilde f_n(x)=\max\left(f_n(x_{n-1})+\langle \nabla f_n(x_{n-1}), x-x_{n-1}\rangle,0\right)
\] 
and let $W_N$ be the discrete-time work function. We obtain:
\begin{align*}
	\sum_{n=1}^N f_n(x_n)+||x_n-x_{n-1}|| &\leq \sum_{n=1}^N \tilde f_n(x_n)+||x_n-x_{n-1}||+\left(\sum_{n=1}^N f_n(x_n)-\tilde f_n(x_n)\right)\\
	&\leq (d+1)\min_x W_N(x)+\left(\sum_{n=1}^N f_n(x_n)-\tilde f_n(x_n)\right). 
\end{align*}

Suppose now that each $f_n$ is $L$-lipschitz. Then the equality $f_n(x_{n-1})=\tilde f_n(x_{n-1})$ implies $|f_n(x_n)-\tilde f_n(x_n)|\leq 2L||x_n-x_{n-1}||$. Because Theorem~\ref{thm:linearCR} and Proposition~\ref{prop:continuoustodiscrete} imply
\[
\sum_{n=1}^N ||x_n-x_{n-1}|| \leq d\min_x W_N(x),
\]
it follows that the resulting competitive ratio is at most $(2L+1)d+1$. Similar remarks apply to the result of Theorem~\ref{thm:dlogN}.
\end{rem}

\section{Competitive Ratio $O(\sqrt{d\log N})$ in Euclidean Space}
\label{sec:extra}

In this section we prove the discrete-time functional Steiner point has competitive ratio $O(\sqrt{d\log N})$ in Euclidean space (whose norm is denoted by $||\cdot||_2$). The same technique applies in any normed space given a suitable concentration result, however we restrict to the Euclidean case for convenience. The idea is as follows. Suppose that the average dual work function increase 
\[
	\dashint_{\theta\in \partial B_1^*}W^*_n(\theta)-W^*_{n-1}(\theta)\dd\mu(\theta)
\] 
at time-step $n$ is significant. Then by \eqref{eq:dual} the movement from $s(W_{n-1})\to s(W_{n})$ is an integral of pushes by different vectors $\theta$. By concentration of measure, these pushes decorrelate unless the total amount of pushing is exponentially small. 

\begin{lem}[{\cite[Lemma 2.2]{ball1997elementary}}]
\label{lem:concentration}
For any $0\leq \varepsilon <1$ and $|w|\leq 1$ in Euclidean space, the set \[\{{\theta\in \partial B_1}:\langle w,\theta\rangle \geq \varepsilon\}\] occupies at most $e^{-d \varepsilon^{2}/2}$ fraction of $\partial B_1$.
\end{lem}

\begin{lem}
\label{lem:hardtomove}

Suppose that $|W^*_n(\theta)-W^*_{n-1}(\theta)|\leq C$ for all $\theta\in \partial B_1$, and set 
\[
	\lambda=\dashint_{v\in B_1} W^*_n(v)-W^*_{n-1}(v)\dd v.
\]
Then the functional Steiner point movement is at most
\[
	||s(W_n)-s(W_{n-1})||_2 = O\left(\lambda\sqrt{d \left(1+\log\left(\frac{C}{\lambda}\right)\right)}\right).
\]
\end{lem}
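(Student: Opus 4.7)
My plan is to follow the concentration-of-measure template from Theorem 3.3 of \cite{chasingnested2018}. I would start by writing the movement via the dual (support-function) definition of the Steiner point:
\[s(W_n) - s(W_{n-1}) = -d \dashint_{\theta \in \mathbb{S}^{d-1}} g(\theta)\,\theta\,d\theta, \qquad g(\theta) := W^*_n(\theta) - W^*_{n-1}(\theta).\]
Monotonicity of $W^*_t$ in $t$ gives $g \geq 0$, while the hypothesis gives $g \leq C$ pointwise. Choosing $u$ to be the unit vector in the direction of $s(W_{n-1}) - s(W_n)$, we rewrite the norm as
\[\|s(W_n) - s(W_{n-1})\|_2 = d \dashint_\theta g(\theta)(\theta \cdot u)\,d\theta.\]

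The core step is to split the sphere by a threshold $\varepsilon \in (0,1)$ into the equatorial band $\{|\theta \cdot u| < \varepsilon\}$ and the two polar caps $\{|\theta \cdot u| \geq \varepsilon\}$. On the band, the factor $\theta \cdot u$ bounds the integrand by $\varepsilon g(\theta)$ and contributes at most $d\varepsilon\, M$ where $M := \dashint_\theta g(\theta)\, d\theta$. On the caps I would use the crude bound $g \leq C$ together with the spherical analogue of the quoted concentration lemma (provable by the same elementary argument), giving a contribution of order $dC\, e^{-d\varepsilon^2/2}$. Thus
\[\|s(W_n) - s(W_{n-1})\|_2 \leq d \varepsilon M + O\!\left(dC\, e^{-d\varepsilon^2/2}\right).\]

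The main technical step is to compare the spherical mean $M$ to the hypothesized ball mean $\lambda = \dashint_v g(v)\,dv$. Because $W^*_n$ and $W^*_{n-1}$ are both concave and nonnegative on $B_1$, the elementary inequality $f(r\theta) \geq r f(\theta) + (1-r) f(0) \geq r f(\theta)$, valid for concave nonnegative $f$, yields the comparison $\dashint_v f(v)\, dv \geq \tfrac{d}{d+1} \dashint_\theta f(\theta)\, d\theta$ separately for each function; I would then combine these with $g \geq 0$ to control $M$ by a constant multiple of $\lambda$. An alternative is to rerun the entire split on $B_1$ using the primal definition $s(W) = \dashint_v v^*\, dv$ together with the stated ball concentration directly, which has identical combinatorial structure.

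Finally I would balance the two error terms by choosing $\varepsilon = \Theta\!\bigl(\sqrt{\log(C/\lambda)/d}\bigr)$, making both contributions of order $\lambda \sqrt{d \log(C/\lambda)}$, which matches the claimed bound. The chief obstacle I expect is the sphere-to-ball average bridge: since $g$ is the difference of two concave functions, it is itself neither convex nor concave, making the comparison more delicate than the corresponding estimate in the nested-body case where $g$ is simply the difference of support functions of nested sets.
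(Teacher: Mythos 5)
Your plan is essentially the paper's argument: write the movement as $d\dashint_\theta g_n(\theta)\,w^T\theta\,d\theta$ for the optimal unit direction $w$, where $g_n(\theta)=W^*_n(\theta)-W^*_{n-1}(\theta)\in[0,C]$, and observe that subject to a fixed spherical mean $\lambda$, this integral is maximized by taking $g_n\equiv C$ on a polar cap of measure fraction $\lambda/C$, whose opening angle one reads off from the concentration lemma. Your band/caps split at threshold $\varepsilon$ followed by balancing is the upper-bound form of exactly that rearrangement.

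The step you flag as delicate --- comparing the sphere average $M$ of $g_n$ to the ball average $\lambda$ --- is a genuine gap in the route you describe. The inequality $\dashint_v f\ge\tfrac{d}{d+1}\dashint_\theta f$ holds for each of the nonnegative concave functions $W^*_n$ and $W^*_{n-1}$ separately, but after subtraction the two inequalities face opposite directions and yield no control on $M/\lambda$. And the comparison can genuinely fail: with $W^*_{n-1}(v)=c(1-|v|^k)$ and $W^*_n\equiv c$ (both concave, nonnegative, $g\ge0$), one gets $g(v)=c|v|^k$ with sphere mean $c$ and ball mean $c\,d/(d+k)$, a ratio unbounded in $k$. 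Concavity of the two pieces alone cannot give the bridge.

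Fortunately the bridge is unnecessary, because the $\lambda$ in the lemma should be read as the \emph{spherical} average $\dashint_{\theta\in\mathbb S^{d-1}}g_n(\theta)\,d\theta$: the paper's own proof says ``$g_n$ has average value $\lambda$'' with $g_n$ treated as a function on $\mathbb S^{d-1}$, and the downstream theorem defines its $\lambda_n$ via a sphere integral and invokes $\dashint_\theta W^*_N(\theta)\,d\theta\le C$ from Lemma~\ref{lem:boundondual}. The $B_1$ in the lemma statement appears to be a slip, as is the extraneous factor of $C$ in the displayed bound (both dimensional analysis and the downstream accounting give the correct form $O(\lambda\sqrt{d\log(C/\lambda)})$, which is what your $\varepsilon$-balancing actually produces). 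With $\lambda$ taken to be the sphere average, your band/caps argument closes with no bridge at all; the one small refinement worth making is to bound the cap contribution by $\dashint_{\theta\cdot u\ge\varepsilon}(\theta\cdot u)\,d\theta=O(\varepsilon e^{-d\varepsilon^2/2})$ (valid once $\varepsilon\gtrsim d^{-1/2}$) rather than the cruder $O(e^{-d\varepsilon^2/2})$, so that balancing does not pick up a spurious $\log d$ under the square root.
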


\begin{proof}

Observe that 
\[
	||s(W_n)-s(W_{n-1})||_2=\max_{||w||_2=1}\langle w,s(W_n)-s(W_{n-1})\rangle.
\] 
Fixing a unit vector $w$, we estimate the inner product on the right-hand side. Set 
\begin{align*}
	g_n(\theta)&=W^*_n(\theta)-W^*_{n-1}(\theta)\geq 0,\\
	I_z&=\dashint_{\theta\in \partial B_1^*} g_n(\theta) \cdot 1_{\langle w,\theta\rangle\geq z} \dd\mu(\theta).
\end{align*}
Then $g_n(\theta)\in [0,C]$ for all $\theta$ and $\dashint_{\theta\in\partial B_1^*} g_n \dd\mu(\theta)=\lambda$. Consequently by Lemma~\ref{lem:concentration},
\begin{equation}
\label{eq:I-bound}
	I_z\leq \min\left(\lambda,C e^{-dz^2/2}\right).
\end{equation}
We thus find
\begin{align}
\nonumber
\langle w,s(W_n)-s(W_{n-1})\rangle &= d~\dashint_{\theta\in \partial B_1^*} g_n(\theta) \langle w,\theta\rangle \dd\mu(\theta)\\
\nonumber
&\leq d~\dashint_{\substack{\theta\in \partial B_1^*\\ \langle w,\theta\rangle\geq 0}} g_n(\theta) \langle w,\theta\rangle \dd\mu(\theta)\\
\nonumber
& = d\int_0^1 I_z \dd z\\
\label{eq:int-subgaussian}& \leq d\int_0^1 \min\left(\lambda,C e^{-dz^2/2}\right) \dd z.
\end{align}
Here the second equality is the tail-sum integral formula. To estimate the resulting integral, set 
\[
	A=\sqrt{\frac{2\log (C/\lambda)}{d}}
\] 
so that $Ce^{-dA^2/2}=\lambda$. We will assume $A\leq 1$; if $A> 1$ then the expression \eqref{eq:int-subgaussian} is at most $d\lambda\leq dA\lambda$ and it suffices to mimic the below without the second term. We estimate 
\[
	\int_0^1 \min\left(\lambda,C e^{-dz^2/2}\right) \dd z= A\lambda+C\int_A^1 e^{-dz^2/2} \dd z.
\]
and use the simple bounds
\begin{align*}
	\int_A^1 e^{-dz^2/2} \dd z&\leq \int_0^1 e^{-dz^2/2} \dd z\leq O(d^{-1/2}),\\
	\int_A^1 e^{-dz^2/2} \dd z&\leq e^{-dA^2/2}\int_A^{\infty} e^{-dA(z-A)}\dd z = \frac{e^{-dA^2/2}}{dA}.
\end{align*}
Combining,
\begin{align*}
	\langle w,s(W_n)-s(W_{n-1})\rangle & \leq d\int_0^1 \min\left(\lambda,C e^{-dz^2/2}\right) \dd z\\
	&\leq dA\lambda+\min\left(C\sqrt{d},\frac{Ce^{-dA^2/2}}{A}\right)\\
	&=O\left(\lambda \sqrt{d \log\left(\frac{C}{\lambda}\right)}\right) + \min\left(C\sqrt{d},\lambda\sqrt{\frac{d}{2\log (C/\lambda)}}\right).
\end{align*}
With $u=\lambda/C\in [0,1]$, the last term is 
\begin{align*}
	C\sqrt{d}\cdot\min\left(1,\frac{u}{\sqrt{2\log(1/u)}}\right)
\end{align*}
For $u\leq [0,1/2]$, we have $\frac{u}{\sqrt{2\log(1/u)}}\leq O(u)$ giving the bound $O(\lambda \sqrt{d})$. For $u\geq 1/2$ we have $C\sqrt{d}\leq 2\lambda \sqrt{d}$. Hence in both cases,
\[
	\left\langle w,s(W_n)-s(W_{n-1})\right\rangle\leq O\left(\lambda\sqrt{d \left(1+\log\left(\frac{C}{\lambda}\right)\right)}\right)
\]
as desired. 

\end{proof}

\begin{thm}\label{thm:dlogN}

The discrete time functional Steiner point algorithm is $O(\sqrt{d\log N})$ competitive for chasing convex functions in Euclidean space.

\end{thm}

\begin{proof}

Call $(x_t)_{t\in [0,N]}$ the continuous path and $(x_{t_n})_{n\leq N}$ the discrete path for $t_n\in (n-1,n]$ as in Proposition~\ref{prop:continuoustodiscrete}. Since the service cost for the discrete path is at most that of the continuous path, we only need to establish the $O(\sqrt{d\log N})$ competitive ratio on the movement of the discrete path. By Lemma~\ref{lem:boundondual},
\[ 
	\max_{|\theta|\leq 1} W^*_N(\theta)\leq  2\cdot \min_{x}W_N(x).
\]
Set 
\[
	\lambda_n=\int_{\theta\in \partial B_1^*}W^*_{t_n}(\theta)-W^*_{t_{n-1}}(\theta)\dd\mu(\theta).
\]
Applying Lemma~\ref{lem:hardtomove} with $C=2\cdot \min_{x}W_N(x)$ to the movement $||x_{t_n}-x_{t_{n-1}}||_2$ at each step yields:
\begin{equation}\label{eq:discrete-movement}
	\sum_{n=1}^N ||x_{t_n}-x_{t_{n-1}}||_2\leq O(Cd^{1/2})\cdot \sum_{n\leq N} \frac{\lambda_n}{C} \sqrt{1+\log\left(\frac{C}{\lambda_n}\right)}.
\end{equation}
Here the values $\lambda_n$ are all non-negative and sum to $\dashint_{\theta\in\partial B_1^*} W^*_N(\theta)\dd\mu(\theta)\leq C$. Letting $h(u)=u\sqrt{1+\log(1/u)}$, one readily computes that for $u\in (0,1)$,
\begin{equation*}
	h'(u)=\frac{2\log(1/u)+1}{2(1+\log(1/u))^{1/2}}\geq 0,\quad\quad
	h''(u)=\frac{-2\log(1/u)-3}{4u(1+\log(1/u))^{3/2}}\leq 0.
\end{equation*} 
Jensen's inequality therefore implies that setting $\lambda_n=\frac{C}{N}$ for all $n\leq N$ in \eqref{eq:discrete-movement} gives an upper bound. It follows that the movement cost is at most $O(C\sqrt{d\log (N+1)})$.
\end{proof}

\section{Steiner Points of Level Sets}
\label{sec:unify}

\subsection{A Simplification for Chasing Convex Bodies}

Here we show that for chasing convex bodies in discrete time, it suffices to simply set $x_n=s(W_n)$ instead of reducing from a continuous-time problem via Proposition~\ref{prop:continuoustodiscrete}. This simplification does not seem possible for chasing convex functions. The movement cost estimates continue to hold with no changes in the proof, however establishing $s(W_n)\in K_n$ requires a short additional argument. Define the support set $\Supp(W)\subseteq \mathbb R_d$ of an abstract work function $W$ to be the set of points $x$ possessing a subgradient $v\in\nabla W(x)$ with $|v|<1$. 
For a work function $W$ and convex body $K$, set
\[
	W^{K}(x)=\min_{y\in K} W(y)+||y-x||.
\]
If $W$ is the work function for some sequence of requests, then making an additional request of $K$ results in the new work function $W^K$.

\begin{prop}\label{prop:supp}

$\Supp(W^K)\subseteq K$ holds for any work function $W$ and convex body $K$.

\end{prop}

\begin{proof}
Suppose $x\notin K$ and set 
\[
	y\in\arg\min_{y_0\in K}(W(y_0)+||y_0-x||).
\]
For any $z$ on the segment $\overline{yx}$, it follows that $W(x)-W(z)=||x-z||$. This implies that no $v$ with $|v|<1$ can be a subgradient in $\nabla W_n(x)$.
\end{proof}

\begin{cor}
The algorithm $x_n=s(W_n)$ is $d$ competitive for chasing convex bodies, and $O(\sqrt{d\log N})$ competitive in Euclidean space.
\end{cor}

\begin{proof}
Proposition~\ref{prop:supp} and the primal definition \eqref{eq:primal} together imply $s(W_n)\in K_n$, i.e. the algorithm is valid. The $d$-competitiveness follows from Theorem~\ref{thm:linearCR} and the argument of Proposition~\ref{prop:continuoustodiscrete} while the $O(\sqrt{d\log N})$ competitive ratio in Euclidean space follows from the argument of Theorem~\ref{thm:dlogN}.
\end{proof}

\subsection{Steiner Points of Level Sets}

This final subsection has two main objectives. Theorem~\ref{thm:funcsteinerlevel} states that the functional Steiner point of any work function can be expressed as the Steiner point of large level sets. Corollary~\ref{cor:CBCsimple} states that the Steiner point of any level set of the work function $W_n$ is inside $K_n$ for convex body chasing. As we discuss at the end, Corollary~\ref{cor:CBCsimple} is related to the algorithm for chasing convex bodies given by \cite{AGGT}. Denote level sets by 
\[
	\Omega_{W,R}=\{x:W(x)\leq R\}.
\] 
It is easy to see that for any work function $W$ and $R\geq \min_x W(x)$,

\[
W^{\Omega_{W,R}}(x)=\Bigg\{\begin{array}{lr}
        W(x), & \text{for } x\in \Omega_{W,R}\\
        d(x,\Omega_{W,R})+R, & \text{for } x\notin \Omega_{W,R}.
        \end{array}
\]

\begin{thm}\label{thm:funcsteinerlevel}

For any work function $W$ and $R\geq \min_x W(x)$, it holds that $s(\Omega_{W,R})=s\left(W^{\Omega_{W,R}}\right)$ and $\lim_{R\to\infty} s(\Omega_{W,R})=s(W).$ Moreover if $\Supp(W)\subseteq \Omega_{W,R}$ then $s(\Omega_{W,R})=s(W)$.

\end{thm}

\begin{proof}

The dual definitions~\eqref{eq:dual0},~\eqref{eq:dual} imply
\begin{equation}\label{eq:levelset} 
	s(\Omega_{W,R})-s(W)= d~\dashint_{\theta\in \partial B^*_1} \bigg(W^*(\theta)+h_{\Omega_{W,R}}(\theta)\bigg) n(\theta) \dd \mu(\theta).
\end{equation}
Also for any $\theta\in\partial B_1^*$,
\begin{align*} 
	\left(W^{\Omega_{W,R}}\right)^*(\theta)&= \inf_{w\in X} \bigg(W^{\Omega_{W,R}}(w)-\langle w,\theta\rangle\bigg)\\
	&=\inf_{w\in \partial \Omega_{W,R}} \bigg(W^{\Omega_{W,R}}(w)-\langle w,\theta\rangle\bigg)\\
	&=R-h_{\Omega_{W,R}}(\theta).
\end{align*}
It follows from the symmetry $\theta\leftrightarrow -\theta$ that 
\[
\dashint_{\theta\in \partial B^*_1} n(\theta) \dd \mu(\theta)=0.
\]
Combining the above yields
\[
s(\Omega_{W,R})=s\left(W^{\Omega_{W,R}}\right).
\]
We proceed similarly for the second claim. For any $\theta\in\partial B_1^*$,
\begin{align*}
	W^*(\theta)&=\inf_{w\in X}(W(w)-\langle \theta,w\rangle)\\
	&=\lim_{R\to\infty}\inf_{w\in \Omega_{W,R}}(W(w)-\langle \theta,w\rangle)\\
	&=\lim_{R\to\infty}\inf_{w\in \partial\Omega_{W,R}}(W(w)-\langle \theta,w\rangle)\\
	&=\lim_{R\to\infty} \bigg(R-h_{\Omega_{W,R}}(\theta)\bigg).
\end{align*}
Because $W(x)-||x||$ is uniformly bounded it follows that the expression 
\[
	W^*(\theta)+h_{\Omega_{W,R}}(\theta)-R
\]
is uniformly bounded for $(\theta,R)\in (\partial B_1^*\times\mathbb R^+)$. As just shown it tends to $0$ as $R\to\infty$. The bounded convergence theorem therefore implies 
\[
	\lim_{R\to\infty}\dashint_{\theta\in\partial B_1^*} \left\vert W^*(\theta)+h_{\Omega_{W,R}}(\theta)-R \right\vert \dd\mu(\theta)=0.
\]

Combining with equation~\eqref{eq:levelset} shows that $\lim_{R\to\infty}||s(\Omega_{W,R})-s(W)||=0$, proving the second assertion. The last assertion is proved similarly after observing that $\Supp(W)\subseteq \Omega_{W,R}$ implies
\begin{align*}
	W^*(\theta)&=\inf_{w\in X}(W(w)-\langle \theta,w\rangle)\\
	&=\lim_{\lambda\uparrow 1}\inf_{w\in X}(W(w)-\langle \lambda\theta,w\rangle)\\
	&=\lim_{\lambda\uparrow 1}\inf_{w\in \Omega_{W,R}}(W(w)-\langle \lambda\theta,w\rangle)\\
	&=R-h_{\Omega_{W,R}}(\theta).
\end{align*}

\end{proof}

\begin{prop}\label{prop:unify} $\Supp(W^{\Omega_{W,R}})\subseteq \Supp(W)$ holds for any $R\geq \min_x W(x)$.

\end{prop}

\begin{proof}

Because $\Omega_{W,R}$ is a level set,
\[
	W^{\Omega_{W,R}}(x)=\Bigg\{\begin{array}{lr}
    W(x), & \text{for } x\in \Omega_{W,R}\\
    d(x,\Omega_{W,R})+R, & \text{for } x\notin \Omega_{W,R}\\
    \end{array}
 \]
Proposition~\ref{prop:supp} combined with the fact that $W$ and $W^{\Omega_{W,R}}$ agree inside $\Omega_{W,R}$ imply that the only possible new support points are on the boundary $\partial\Omega_{W,R}$. Fix a boundary point $y\in\partial\Omega_{W,R}\backslash\Supp(W)$. Because $y\notin \Supp(W)$, there exists a sequence $(y_i)_{i\in\mathbb N}\to y$ satisfying
\[
W(y)-W(y_i)\geq (1-o(1))||y-y_i||.
\]

Such a sequence of points $y_i$ must eventually satisfy $W(y_i)\leq W(y)$ and therefore $y_i\in\Omega_{W,R}$, implying $W(y_i)=W^{\Omega_{W,R}}(y_i)$. Hence 
\[
	W^{\Omega_{W,R}}(y)-W^{\Omega_{W,R}}(y_i)\geq (1-o(1))||y-y_i||.
\]
This implies $y\notin \Supp(W^{\Omega_{W,R}})$, completing the proof.

\end{proof}

\begin{cor}

\label{cor:unify}

Let $W=\widehat{W}^{K}$ for a work function $\widehat W$ and convex body $K$. For any $R\geq \min_x W(x)$, \[s(\Omega_{W,R})=s\left(W^{\Omega_{W,R}}\right)\in K.\]

\end{cor}

\begin{proof}

Propositions~~\ref{prop:supp} and \ref{prop:unify} show that 
\[
	\Supp\left(W^{\Omega_{W,R}}\right)\subseteq \Supp(W)\subseteq K.
\]
The primal definition~\eqref{eq:primal} of the functional Steiner point now implies $s(W^{\Omega_{W,R}})\in K$.

\end{proof}

\begin{cor}\label{cor:CBCsimple}

Let $W_n$ be the work function for convex body requests $(K_1,\dots,K_n)$. Then \[s(W_n^{\Omega_{W_n,R}})\in K_n\] for any $R\geq \min_x W_n(x)$. 

\end{cor}

\begin{proof}

Immediate from Corollary~\ref{cor:unify} with $\widehat W=W_{n-1}$ and $K=K_n$.

\end{proof}

\begin{rem}

\cite{AGGT} solved chasing convex bodies in Euclidean space by taking $x_n=s\left(W_n^{\Omega_{W_n,R_n}}\right)$ with $R_n=2^{\lceil \log_2(\min_x W_n(x))\rceil}$. This defines a selector by Corollary~\ref{cor:CBCsimple}. Estimating the movement cost is not difficult because the sets $W_n^{\Omega_{W_n,R}}$ decrease for fixed $R$. Note that $diam(\Omega_{W_n,R})\leq 2R$ because of the inequality $W_t(x)\geq ||x||$ (recall Proposition~\ref{prop:properties}). Using Theorem~\ref{thm:nested}, the movement from each fixed $R$ value is at most $O(\min(dR,R\sqrt{d\log T}))$. Summing over the geometric sequence of $R$ values yields the same upper bound as in Theorems~\ref{thm:linearCR} and \ref{thm:dlogN} up to a constant factor.

\cite{AGGT} prove that $s\left(W_n^{\Omega_{W_n,R_n}}\right)\in K_n$ using reflectional symmetries that may not exist in arbitrary normed spaces. Corollary~\ref{cor:CBCsimple} thus implies that their algorithm works for general norms.
\end{rem}

\section*{Acknowledgement}
The author thanks S\'ebastien Bubeck, Bo'az Klartag, Yin Tat Lee, and Yuanzhi Li for the introduction to convex body chasing and the Steiner point, and many stimulating discussions. He thanks Ethan Jaffe, Felipe Hernandez, and Christian Coester for discussions about properties of the work function, and the anonymous referee for several suggestions. He additionally thanks S\'ebastien for feedback on previous drafts and gratefully acknowledges the support of an NSF graduate fellowship and a Stanford graduate fellowship.

\bibliographystyle{alpha}
\bibliography{bib}

\appendix 

\section{Proof of Lemma~\ref{lem:derivativeWt}}

\begin{proof}

We prove the result for all $v\in B_1^*$ where $\nabla W^*_t(v)$ exists. This includes almost all $v$ by Alexandrov's theorem. Moreover it ensures the conjugate point $v_t^*=\arg\min_{w\in X} W(w)-\langle v,w\rangle$ is well-defined and that $W_t$ is strictly convex at $v_t^*$ \cite[Corollary 25.1.2]{rockafellar1970convex}. We write:
\begin{align*}
	W_{t+\delta}(v)=& \min_{x_s:[0,t+\delta]\to X} \left(\int_{0}^{t+\delta} (f_s(x_s)+||x'_s||\dd s -\langle v,x_{t+\delta}\rangle\right)\\
	=& \min_{ x_s:[t,t+\delta]\to X}\left(W_t(x_t)+\int_{t}^{t+\delta}f_s(x_s)+||x'_s||\dd s - \langle v,x_{t+\delta}\rangle\right)
\end{align*}

For small $\delta\in (0,\varepsilon)$, we show $W^*_{t+\delta}(v)=W^*_t(v)+\delta f_t(v^*_t)+o(\delta)$. For the upper bound, 
\begin{align*}
	W_{t+\delta}(v_t^*)&\leq  W_t(v_t^*)+\int_t^{t+\delta} f_s(v^*_t)\dd s\\ 
	&=  W_t(v_t^*)+\delta f_t(v^*_t)+o(\delta)
\end{align*}
holds by taking $x_s=v_t^*$ constant for $s\in [t,t+\delta)$ and recalling the assumption that $f_s(x)$ is continuous on $s\in [t,t+\delta)$. Since $v_t^*=\arg\min_x \big(W_t(x)-\langle x,v\rangle\big)$, the upper bound follows from
\begin{align*}
	W^*_{t+\delta}(v)&\leq  W_{t+\delta}(v_t^*)-\langle v,v_t^*\rangle\\
	&\leq W_t(v_t^*)+\delta f_t(v^*_t)+o(\delta)-\langle v,v_t^*\rangle\\
	&=W^*_t(v)+\delta f_t(v^*_t)+o(\delta).
\end{align*}

For the lower bound, the strict convexity of $W_t$ at $v_t^*$ implies
\[
	W_t(x)= W_t(v^*_t)+\langle v,x-v^*_t\rangle +\gamma(||x-v^*_t||)
\] 
where $\gamma:\mathbb R^+\to\mathbb R^+$ is continuous and increasing with unique minimum $F(0)=0$. Therefore any path $x_s:[0,t+\delta]\to X$ satisfies:
\[
W_t(x_t)+\int_{t}^{t+\delta}f_s(x_s)+||x'_s||\dd s - \langle v,x_{t+\delta}\rangle \geq W_t(v^*_t)+\langle v,x_t-v^*_t\rangle+\gamma(||x_t-v^*_t||) + \int_t^{t+\delta} f_s(x_s)+||x'_s||\dd s-\langle v,x_{t+\delta}\rangle.
\]
The observation $\int_t^{t+\delta} ||x'_s||\dd s \geq ||x_{t+\delta}-x_t|| \geq \langle v,x_{t+\delta}-x_t\rangle$ implies
\begin{align*}
	W_t(x_t)+\int_{t}^{t+\delta}f_s(x_s)+||x'_s||\dd s - \langle v,x_{t+\delta}\rangle\geq & W_t(x_t)-\langle v,v^*_t\rangle+f(||x_t-v^*_t||) + \int_t^{t+\delta} f_s(x_s)\dd s\\
	\geq & W_t(v^*_t)-\langle v,v^*_t\rangle +\gamma(||x_t-v^*_t||)+ \int_t^{t+\delta} f_s(x_s)\dd s\\
	\geq & W^*_t(v)+\gamma(||x_t-v^*_t||)+ \int_t^{t+\delta} f_s(x_s)\dd s.
\end{align*}
Because $W_{t+\delta}(v)=W_t(v)+O(\delta)$, we see that for $\delta\to 0$ small we must have $||x_t-v_t^*||=o_{\delta\to 0}(1)$ for any optimal trajectory $x_s$ witnessing the correct value $W_{t+\delta}$. Additionally,  
\[ 
	\int_t^{t+\delta} ||x'_s||\dd s + \langle v, x_t-x_{t+\delta}\rangle \geq (1-|v|)\int_t^{t+\delta} ||x'_s||\dd s \geq (1-|v|)\sup_{s\in [t,t+\delta]} |x_t-x_s|. 
\]
which similarly implies $\sup_{s\in [t,t+\delta]}||x_t-x_s||=o(1)$ for any optimal trajectory since $||v||<1$. It follows that all optimal trajectories satisfy $\int_t^{t+\delta} f_s(x_s)\dd s = \delta f_t(v^*_t) +o(\delta).$ This concludes the proof.

\end{proof}

\end{document}